\DeclareMathAlphabet{\mathpzc}{OT1}{pzc}{m}{it}
\DeclareMathAlphabet{\mathcalligra}{T1}{calligra}{m}{n}
\def\Tr{\operatorname{Tr}} \def\>{\rangle} \def\<{\langle}
 \def\id{\mathsf{id}}
 \def\mE{\mathcal{E}}
\def\mN{\mathcal{N}}
\def\mL{\mathcal{L}}
\def\sH{\mathcal{H}} 
\def\sL{\mathsf{L}}
\def\sD{\mathsf{D}}
\def\bound{\mathsf{L}}
 \def\dec#1{\mathscr{D}}
 \def\openone{\mathds{1}}
\def\mD{\mathcal{D}}
\def\mC{\mathcal{C}}
\def\mM{\mathcal{M}}
\def\mR{\mathcal{R}}
\newcommand{\set}[1]{\mathscr{#1}}
\renewcommand{\qedsymbol}{\nobreak \ifvmode \relax \else
  \ifdim \lastskip<1.5em \hskip-\lastskip \hskip1.5em plus0em
  minus0.5em \fi \nobreak \vrule height0.75em width0.5em
  depth0.25em\fi}
\renewcommand{\ge}{\geqslant}
\renewcommand{\le}{\leqslant}
\newtheorem{corollary}{Corollary}
\newtheorem{lemma}{Lemma}
\newtheorem{proposition}{Proposition}
\newtheorem{definition}{Definition}
\theoremstyle{remark}
\theoremstyle{definition}
\newcommand{\bea}{\begin{eqnarray}}
\newcommand{\eea}{\end{eqnarray}}
\newcommand{\be}{\begin{equation}}
\newcommand{\ee}{\end{equation}}
\def\pg{P_{\operatorname{guess}}}
\begin{document}

	\title{Equivalence between divisibility and monotonic decrease of information\\in classical and quantum stochastic processes}

\author{Francesco \surname{Buscemi}}

\email{Corresponding author: buscemi@is.nagoya-u.ac.jp}

\affiliation{Graduate School of Information Science, Nagoya
	University, Chikusa-ku, Nagoya, 464-8601, Japan}

\author{Nilanjana \surname{Datta}}


\affiliation{Statistical Laboratory, University of Cambridge, Cambridge CB3 0WB, U.K.}

	\begin{abstract}
		The crucial feature of a memoryless stochastic process is that any information about its state can only decrease as the system evolves. Here we show that such a decrease of information is equivalent to the underlying stochastic evolution being divisible. The main result, which holds independently of the model of the microscopic interaction and is valid for both classical and quantum stochastic processes, relies on a quantum version of the so-called Blackwell-Sherman-Stein theorem in classical statistics.
	\end{abstract}

	\maketitle





\section{Introduction}

Discrete-time Markov chains constitute the mathematical model of \textit{memoryless stochastic processes}, i.e.,  those processes whose future state can be (statistically) predicted from the present, independently of the past; see, e.g., Refs.~\cite{parzen} and~\cite{Norris1998}. As many real-world situations appear to be approximately memoryless, Markov chains are ubiquitous in many fields, ranging from physics, chemistry, biology and information theory, to economics and social sciences.

Formally, a discrete-time stochastic process is described by an ordered sequence of random variables $(X_i)_{0\le i\le N}$, whose values $x_i$ represent the state of the system at successive time-steps $t_0\le\cdots\le t_i\le\cdots\le t_N$. The probability law of the process is given by specifying a joint $N$-point probability distribution
$\operatorname{Pr}(X_{N}=x_N;X_{N-1}=x_{N-1};\cdots;X_{0}=x_0)=p(x_N,t_N;x_{N-1},t_{N-1};\cdots;x_0,t_0)$. The process is Markovian if and only if the joint probability distribution can be factorized as follows (see, e.g., Section~6.2 in~\cite{parzen} and Theorem~1.1.1 in~\cite{Norris1998}):
\begin{equation}\label{eq:class-div}
p(x_N,t_N|x_{N-1},t_{N-1})\cdots p(x_1,t_1|x_0,t_0)p(x_0,t_0).
\end{equation}
Notice that, even though it is often assumed that the conditional probabilities $p(r,t_i|s,t_{i-1})$ in~\eqref{eq:class-div} do not depend on time (in which case the chain is called homogeneous), discrete-time Markov chains can in general be inhomogeneous, i.e., the conditional probabilities may vary with time. In any case, Eq.~\eqref{eq:class-div} suggests an insightful information-theoretic interpretation: it states that any Markov chain can always be seen, without loss of generality, as arising from an initial distribution $p(x_0,t_0)$ that propagates through an ordered sequence of independent noisy channels $(T(t_i,t_{i-1}))_{1\le i\le N}$, defined by the conditional probabilities $p(r,t_i|s,t_{i-1})$. In particular, for any pair of time-steps $(t_j, t_i)$, with $t_j\ge t_i$, there exists a noisy channel $T(t_j,t_i)$, with transition matrix $p(r,t_j|s,t_i)$, such that $p(x_j,t_j)=\sum_{x_i}p(x_j,t_j|x_i,t_i)p(x_i,t_i)$, independent of the choice of the initial distribution $p(x_0,t_0)$. This is called the \textit{divisibility property} of Markov chains~\cite{parzen}.

Such an information-theoretic description not only imparts an 
operational significance to the
memoryless property of a Markov chain, 
but also suggests ways to `quantify' it. This is done in terms of various inequalities, which show how certain information-theoretic quantities cannot increase along a Markov chain. Such inequalities, called data-processing inequalities~\cite{cover,cover2012elements}, formalize the idea that the \textit{a priori} knowledge that one has about the state of the system cannot increase along a Markov chain.

While the formalism of Markov chains (both homogeneous and inhomogeneous in time) is perfectly settled in classical probability theory, 
when trying to extend the same ideas to quantum theory, some ambiguities arise. Concerning this ongoing debate, we refer the interested reader to~\cite{Breuer2009,benatti,Bylicka2014,Rivas2010,Liu-nature,luo,Chruscinski2011,UshaDevi2011,UshaDevi2012,Smirne2013,Hall2014,Vacchini2011,Darrigo,Xu,ringbauer} and, in particular, to the recent comprehensive review by Rivas, Huelga and Plenio~\cite{Rivas}. The reasons for such ambiguities can be arguably traced back to two main factors. Firstly, there is a formal obstacle: quantum theory does not, in general, allow the description of quantum stochastic processes in terms of \textit{joint} $N$-point probability distributions or quantum states, so that any direct analogy with classical stochastic processes is irreparably lost. A thorough discussion about this point is outside the scope of the present contribution and we refer the interested reader to the discussions presented in Refs.~\cite{Vacchini2011,Rivas} and~\cite{timecorr1,timecorr2}. Secondly, a historical reason: traditionally, the definition of quantum Markov processes has been restricted to those processes which are homogeneous in time, with special emphasis on their semigroup structure (see, e.g., Ref.~\cite{kossakowski-semigroup,gorini,lindblad,alicki-lendi}). Indeed, the systematic study of inhomogeneous quantum stochastic processes is still at its infancy, and there exist different, possibly inequivalent, approaches to it~\cite{Rivas}.

The approach we adopt here is one that was first advocated by Breuer, Laine and Piilo in~\cite{Breuer2009}, namely, that a quantitative definition of `quantum Markovianity' may be possible in information-theoretic terms, by defining as `Markovian' those processes that never increase `information' over time. 
The terms given within quotation marks will be defined precisely in the next section. According to this approach, data-processing inequalities serve as witnesses of non-Markovianity, since the violation of any such inequality implies that the underlying stochastic process is non-Markovian~\cite{Breuer2009,Rivas2010,Chruscinski2011,UshaDevi2011,luo,Smirne2013,Hall2014,Rivas}. Indeed, while there are ambiguities in defining what `quantum Markovianity' is, general consensus does exist on what is \textit{not} Markovian, and processes that violate any sensible data-processing inequality should definitely be considered as non-Markovian. The following question then arises naturally: is it possible to assume, as starting point, the validity of some data-processing inequality, and derive, from such an assumption only, a complete algebraic characterization of all those processes that never violate such an inequality?

In what follows, we show that such a characterization is indeed possible, and that it exactly singles out so-called \textit{divisible processes} (see Eq.~(\ref{eq:divisibility}) below), thus supporting the idea that `information-theoretic Markovianity' is equivalent to `divisibility.' An important step in this same direction, though based on a completely different proof strategy, has already been made in Ref.~\cite{Chruscinski2011}. We note, however, that the analysis there relies on an extra assumption, which cannot be justified solely on an information-theoretic basis, on the nature of the underlying stochastic process: it is thus model-dependent and less general than the one proposed here. We will come back to this point later on, in Section~\ref{sec:discussion}.

The paper is structured as follows: in Section~\ref{sec:mappings} we introduce the notation, terminology, and basic definitions; in Section~\ref{sec:divisibility} we formally define the idea of divisibility and the memoryless property, and discuss their connections; in Section~\ref{sec:info-dec} we define information-decreasing evolutions via the notion of guessing probability; the two main results, equating the notion of information decrease with divisibility (and hence, with the memoryless property)  are discussed in Sections~\ref{sec:semiclassical} and~\ref{sec:quantum}, for the classical and the quantum cases, respectively; Section~\ref{sec:discussion} concludes the paper with some discussions and remarks.

\section{Quantum dynamical mappings}\label{sec:mappings}

In what follows, we only consider quantum systems defined on finite dimensional Hilbert spaces $\sH$. The definitions used here closely adhere to those given in standard textbooks~\cite{Petruccione2002,nielsen2010quantum}. We denote by $\bound(\sH)$ the set of all linear operators acting on $\sH$, and by $\sD(\sH)$ the set of all density operators (or states) $\rho\in\bound(\sH)$, with $\rho\ge 0$ and $\Tr[\rho]=1$. The identity operator in $\bound(\sH)$ is denoted by the symbol $\openone$. An \textit{ensemble} $\mE=\{p(x);\rho^x\}_{x \in \set{X}}$ is a finite family of states $\rho^x$ and their \textit{a priori} probabilities $p(x)$.
A \textit{positive-operator valued measure} (POVM) is a finite family of positive semi-definite operators $\{P^y\}_{y\in\set{Y}}$, such that $\sum_{y\in\set{Y}}P^y=\openone$. A \emph{quantum channel} is a linear, completely positive trace-preserving (CPTP) map $\mN:\bound(\sH_A)\to \bound(\sH_B)$. The identity channel from $\bound(\sH)$ to itself is denoted by $\id$.

The physical model we consider is that of a quantum system ($S$) which, at an initial time $t_0$, is put in contact with its surrounding environment ($E$) and allowed to evolve jointly with the latter through successive discrete instants in time $t_0\le t_1\le \cdots\le t_N$. We assume that the environment, at time $t_0$, is in some fixed state  (e.g., equilibrium state) $\sigma_{E}$, which is uncorrelated with the state of the system. The joint unitary evolution can be described by a (discrete) two-parameter family of unitary operators $U(j,i)\in\bound(\sH_S\otimes\sH_E)$, with $0\le i\le j\le N$, each one modeling the joint system-environment evolution from time $t_i$ to time $t_j\ge t_i$, and satisfying the composition law $U(k,i)=U(k,j)U(j,i)$, for all $t_k\ge t_j\ge t_i$. Of course, the consistency requirement $U(i,i)=\openone$, for all $i$, is understood.

Hence, if the initial state of the system is $\rho^0_{S}$, its state at time $t_i$ is given by
\begin{equation}\label{eq:model}
\rho^i_{S}=\Tr_{E}[U(i,0)\ (\rho^0_{S}\otimes\sigma_{E})\ U^\dag(i,0)].
\end{equation}
In this paper we also assume that the system's state in~(\ref{eq:model}) can be arbitrarily initialized, so that Eq.~(\ref{eq:model}) can be used to define a sequence of quantum channels from $\bound(\sH_S)$ into itself given by
\begin{equation*}
\mN(t_i,\bullet):=\Tr_E[U(i,0)\ (\bullet\otimes\sigma_E)\ U^\dag(i,0)],
\end{equation*}
describing the change of the reduced system from the initial time $t_0$ to later times $t_i\ge t_0$, and satisfying the consistency requirement $\mN(t_0,\bullet)=\id$. In what follows, for the sake of readability, we will denote each channel $\mN(t_i,\bullet)$ simply by $\mN^i$.

In fact, the Stinespring-Kraus unitary representation theorem~\cite{kraus1971general,Stinespring1955} guarantees that any sequence of quantum channels $(\mN^i)_{i\ge 0}$ with $\mN^0=\id$ can always be physically interpreted as arising from an open-system evolution similar to the one described above. It is hence possible (and preferable, whenever the underlying microscopic model is unknown) to start the analysis from an arbitrarily given family of channels, without further assumptions about the underlying interaction.

In order to keep our analysis general enough to encompass typical information-theoretic processes like encodings, decodings, noisy channels, quantum measurements, decision processes, coarse-grainings, etc, we allow the input and output systems to be associated with different Hilbert spaces, so that channels $\mN^i$ are linear CPTP maps, all with the same initial space $\bound(\sH_S)$ but with different output spaces $\bound(\sH_i)$. For later convenience, we summarize the above discussion in the following definition:

\begin{definition}[Quantum dynamical mappings]
	Given an initial quantum system $S$ and its Hilbert space $\sH_S$, a (discrete-time) \textit{dynamical mapping} of $S$ is given by a sequence of quantum channels $(\mN^i)_{i\ge 0}$ from $\bound(\sH_S)$ into $\bound(\sH_i)$, each modeling the operation mapping $S$ from the initial time $t_0$ to later times $t_i\ge t_0$, and satisfying the consistency requirement $\mN^0=\id$.
\end{definition}

Clearly, when we know that the system's Hilbert space does not change in time, we are back to the usual scenario in which $\sH_i\cong\sH_S$, for all $i$: all the results presented in this paper are still valid, without modification, in this special case too.

\section{Divisibility and the memoryless property}\label{sec:divisibility}

A crucial point to stress is that, while the joint system-environment evolution can always be divided (as a consequence of its unitarity) into successive steps, i.e., $U(j,i)=U(j,0)[U(i,0)]^\dag$, for all $j\ge i$, the same is not in general true for the reduced dynamics of the system $S$ alone. Namely, given a dynamical mapping $(\mN^i)_{i\ge 0}$, it is not in general possible to find a family of channels $\{\mL(j,i)\}_{j\ge i\ge0}$ from $\bound(\sH_i)$ into $\bound(\sH_j)$ such that
\begin{equation}\label{eq:divisibility}
\mN^j=\mL(j,i)\circ\mN^i,\ \textrm{for all }j\ge i\ge0.
\end{equation}
Whenever this is the case, we say that the dynamical mapping $(\mN^i)_{i\ge 0}$ is \textit{divisible}. Notice that, in order to show that a dynamical mapping $(\mN^i)_{i\ge 0}$ is divisible, it is sufficient to find another sequence of quantum channels $(\mC^j)_{j\ge1}$ from $\bound(\sH_{j-1})$ to $\bound(\sH_j)$, such that
\begin{equation}\label{eq:divisibility-maps-C}
\mN^{i+1}=\mC^{i+1}\circ\mN^{i},\ \textrm{for all }i\ge0,
\end{equation}
with the consistency requirement $\mC^1=\mN^1$.
The channels $\mL(j,i)$ in~(\ref{eq:divisibility}) are then given by
\[
\mL(j,j)=\id,\ \textrm{for all }j\ge0,
\]
and
\[ 
\mL(j,i)=\mC^{j}\circ\mC^{j-1}\circ\cdots\circ\mC^{i+1},\ \textrm{for all }j>i\ge0.
\]
A schematic representation of a divisible dynamical mapping is given in Fig.~\ref{fig:1}.
\begin{figure}[tb]
	\includegraphics[width=\columnwidth]{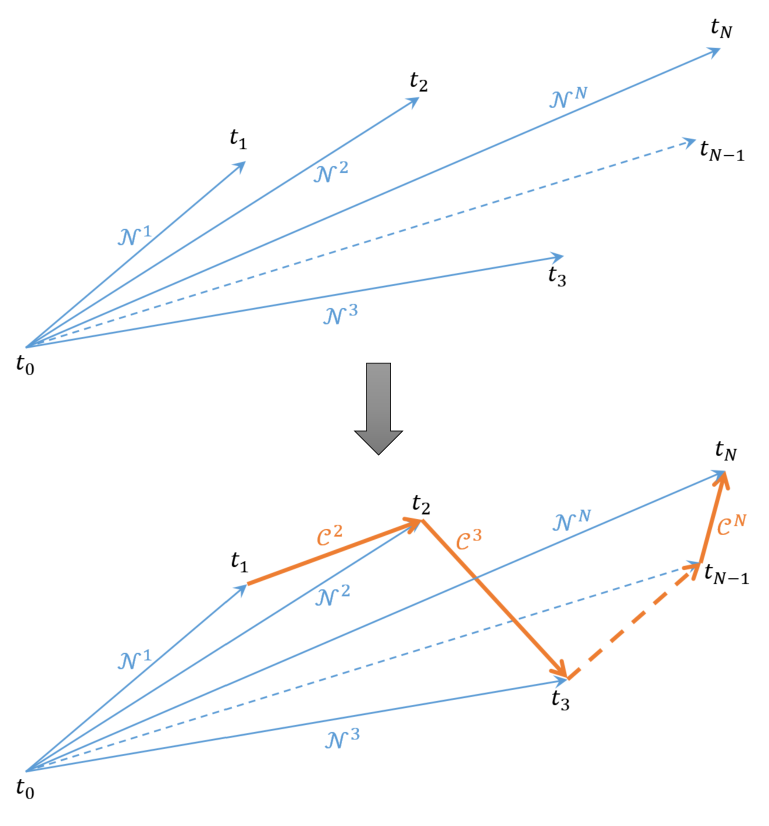}
	\caption{Schematic representation of a discrete-time dynamical mapping. \textbf{Top}: the system's evolution from an initial time $t_0$ to the final time $t_N$ is represented by a sequence of quantum channels $(\mN^i)_{0\le i\le N}$ (the thin arrows in the figure), describing the stochastic evolution of the system from the initial time $t_0$ to some later time $t_i\ge t_0$. \textbf{Bottom}: the dynamical mapping is divisible if there exists another sequence of quantum channels $(\mC^i)_{1\le i\le N}$ (the thick arrows) such that $\mN^{i+1}=\mC^{i+1}\circ\mN^{i}$, for all $0\le i\le N-1$.}
	\label{fig:1}
\end{figure}

\subsection{The memoryless property}

Divisibility, as described above, is intimately related with the memoryless property as follows. Suppose that a joint system-environment interaction (on which we do not make any particular assumption) gives rise to a reduced system dynamics that is divisible. This means that, from the point of view of an observer without direct access to the environment, the evolution of the system is completely indistinguishable from the sequential application of independent quantum channels $(C^i)_{i\ge 1}$, each modeling the evolution of the system from time-step $t_{i-1}$ to $t_i$. 

Then, the Stinespring-Kraus theorem~\cite{Stinespring1955,kraus1971general} says that a sequential application of channels $(C^i)_{i\ge 1}$ can be represented as a sequence of independent unitary interactions $(W_1,W_2,W_3,\cdots)$ of the system with a corresponding sequence of independent environments, initialized in states $(\sigma_E,\sigma^{(2)}_E,\sigma^{(3)}_E,\cdots)$ and discarded after the interaction. Such a model, similar to collision-like~\cite{collision}  or power-dilation~\cite{buscemi-power} models, is depicted in Fig.~\ref{fig:2}.

\begin{figure}[tb]
	\includegraphics[width=\columnwidth]{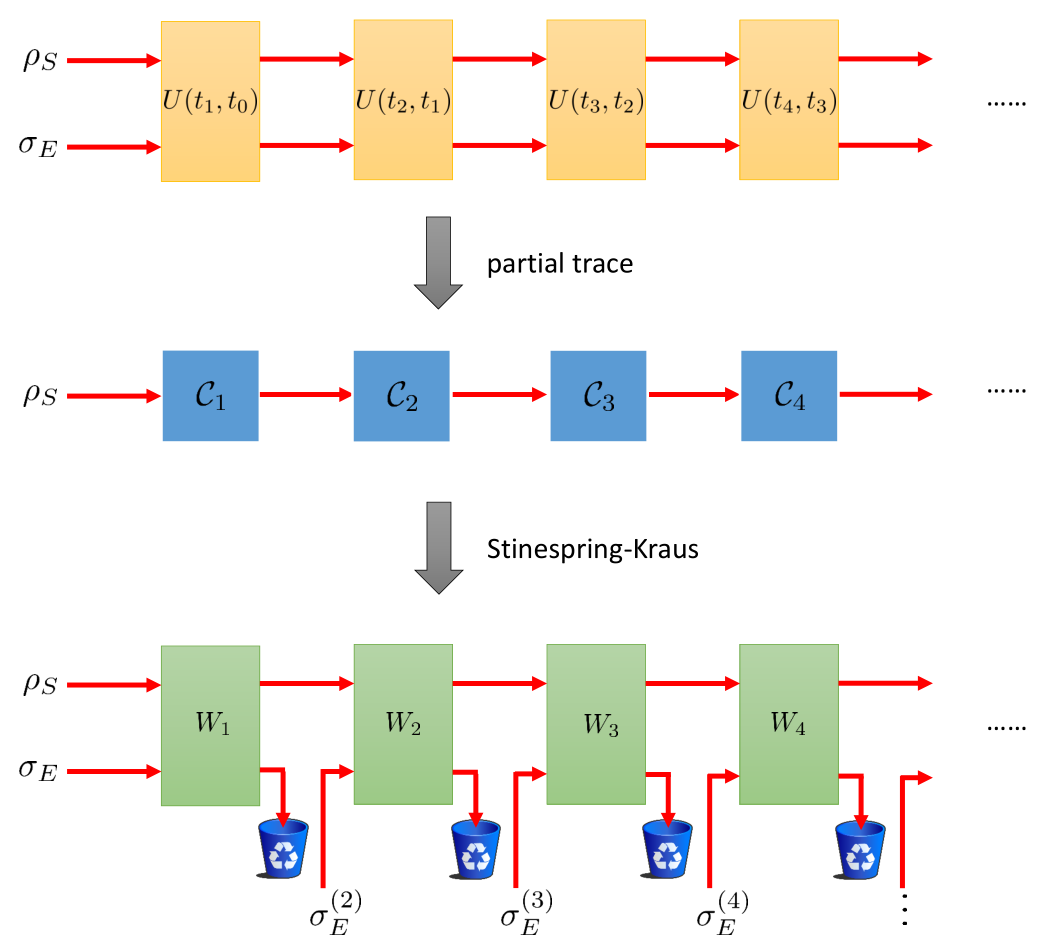}
	\caption{\textbf{Top}: the joint system-environment evolution. \textbf{Middle}: the reduced dynamics of the system is divisible if it cannot be distinguished from a sequence of independent quantum channels $(\mC^i)_{i\ge1}$ applied in series. \textbf{Bottom}: as a consequence of the Stinespring-Kraus representation theorem~\cite{Stinespring1955,kraus1971general}, any divisible process can always be thought as originating from a collision-like model, in which the system interacts with an environment that is reset after every interaction.}
	\label{fig:2}
\end{figure}

It is then clear that divisibility implies, as a consequence of the Stinespring-Kraus representation theorem, a strong form of the memoryless property: no memory can be kept of the past, because the system's evolution is indistinguishable from that arising from the interaction with an environment that is reset at each time-step.

Two remarks are in order at this point. First, the fact that a divisible quantum mapping admits a collision-like model does not mean that the underlying joint evolution \textit{actually is} collision-like. In general, it is possible that correlations between the system and its environment are established and kept along the evolution; however, if the system's dynamics is divisible, such correlations do not give rise to any observable memory effect. This is in line with the fact that a well-defined CPTP reduced dynamics exists also for strongly correlated systems~\cite{Buscemi2013}. 

The second remark is about weaker notions of divisibility, most notably the so-called P-divisibility property~\cite{Rivas}, which holds whenever there exists a sequence of (not necessarily completely) \textit{positive}  trace-preserving maps $(\mathcal{P}^i)_{i\ge 1}$ satisfying Eq.~\eqref{eq:divisibility-maps-C}. In such a case, the Stinespring-Kraus representation theorem does not hold, so that the relation with the memoryless property is lost. In addition, it is possible to \textit{observe} memory effects just by suitably extending the system (see Corollary~\ref{cor:memory-effects} below).

\section{Information-decreasing dynamical mappings}\label{sec:info-dec}

As we mentioned before, the fact that information can only decrease along a Markov chain can be formalized in many ways, via a number of data-processing inequalities~\cite{czizar,cover,Rivas}. In what follows, we focus on one such data-processing inequality, which enjoys a simple definition and a natural interpretation.

Suppose that the experimenter knows \textit{a priori} that the system's `true' state belongs to a known family of possible states $\{\rho^x_S\}_x$, and that each element $\rho^x_S$ can be the `true' state with probability $p(x)$. The experimenter's initial (partial) knowledge about the system is therefore modeled by an ensemble $\mE=\{p(x);\rho^x_S\}_x$. In this situation, therefore, the information initially possessed by the experimenter depends on the distinguishability of the states in the ensemble: the higher the distinguishability of the states $\rho^x_S$, the more information is available to the experimenter. A natural measure of the information about the system's initial state is therefore given by the \textit{guessing probability}~\cite{holevo1973statistical,yuen1975}
\[
\pg(\mE):=\max\sum_xp(x)\Tr[P^x_S\ \rho^x_S ],
\]
where the maximization is over all POVMs $\{P^x_S\}_x$ on $\sH_S$. The fact that the guessing probability cannot increase under the action of a channel on the states of the ensemble is a very simple consequence of its definition.

\begin{definition}\label{def:info-dec}
	A discrete-time dynamical mapping $(\mN^i)_{i\ge0}$ is said to be \emph{information decreasing} if and only if, for any ensemble $\mE=\{p(x);\rho^x_S\}_x$, the ordered sequence of guessing probabilities $(\pg(\mE_i))_{i\ge0}$, where \[\mE_i:=\{p(x);\mN^i(\rho_{S}^x)\}_x,\] is monotonically non-increasing, i.e., \[\pg(\mE_i)\ge\pg(\mE_{i+1})\] for all $i\ge0$.
\end{definition}

The above definition constitutes our formalization of the fact that the information about the initial state of the system does not increase as the system evolves. This, of course, has to happen \textit{irrespective} of the information about the system that the experimenter initially has. This fact is reflected in the above definition by the requirement that the guessing probability cannot increase \textit{for any ensemble of initial states}, i.e., for any finite set $\set{X}=\{x\}$, any probability distribution on $\set{X}$, and any collection of states $\rho^x_S\in\sD(\sH_S)$.

This paper builds upon a series of results extending the so-called Blackwell-Sherman-Stein theorem~\cite{blackwell1953equivalent,torgersen1970comparison,Torgersen1991} of classical statistics to quantum statistical decision theory~\cite{buscemi2005clean,shmaya2005comparison,chefles2009quantum,buscemi2012comparison,buscemi2012all,Buscemi2014}. In particular, a crucial role in this paper is played by the following result:


\begin{lemma}[\cite{Buscemi2014}]\label{lemma:ordering}
	Given two channels $\mN^i:\sL(\sH_S)\to\sL(\sH_i)$, $i=1,2$, the following are equivalent:
	\begin{enumerate}
		\item $\pg(\mE_1)\ge\pg(\mE_2)$ for any ensemble $\mE=\{p(x);\rho^x_S\}_x$, where $\mE_i:=\{p(x);\mN^i(\rho^x_S)\}_x$;
        \item $\pg(\mE_1)\ge\pg(\mE_2)$ for any ensemble $\mE=\{p(x);\rho^x_S\}_x$ with $\sum_xp(x)\rho^x_S=(\operatorname{dim}\sH_S)^{-1}\openone_S$;
		\item for any POVM $\{Q^y\}_y$ on $\sH_2$, there exists a corresponding POVM $\{P^y\}_y$ on $\sH_1$ such that $\Tr[\mN^1(\rho_S)\ P^y]=\Tr[\mN^2(\rho_S)\ Q^y]$, for any $y\in\set{Y}$ and for any $\rho_S\in\sD(\sH_S)$.
	\end{enumerate}
\end{lemma}

\begin{proof}
	That (1) implies (2) is obvious, since point (2) considers only a subset of all possible ensembles, i.e., those whose average state is the completely mixed state $(\operatorname{dim}\sH_S)^{-1}\openone_S$. Also the implication (3) $\Rightarrow$ (1) is trivial, since, if (3) holds, then any statistics obtainable from the outputs of $\mN^2$ can also be obtained from the outputs of $\mN^1$.
	
	The remaining implication, i.e., (2) $\Rightarrow$ (3), is a direct consequence of Theorem~3 of Ref.~\cite{Buscemi2014}: point number (2) above corresponds to point number (4) there -- point number (3) above corresponds to point number (3) there.
\end{proof}

In other words, the guessing probability for a channel $\mN^1$ (for any ensemble) is greater than or equal to that for a channel $\mN^2$ if and only 
if the image of $\mN^2$ (in the Heisenberg picture) is contained in that of $\mN^1$.  
We use the notation $\mN^1\succeq \mN^2$ (which denotes a partial ordering between channels) whenever one of the above conditions holds.
Accordingly, a dynamical mapping $(\mN^i)_{i\ge0}$ is information decreasing if and only if \[\mN^0\succeq\mN^1\succeq\cdots\succeq\mN^i\succeq\cdots.\]

\section{The semiclassical case}\label{sec:semiclassical}

Even though our analysis has been focused so far on the case of quantum dynamical mappings, we now show how the classical case too can also be treated within the same framework, under some further assumptions. This is done by passing through the intermediate case of `semiclassical' dynamical mappings, defined as sequences of channels $(\mN^i)_{i\ge0}$ with commuting output, i.e.,
\[
[\mN^i(\rho_S),\mN^i(\rho'_S)]=0,
\]
for all $i\ge 1$ and for all $\rho_S,\rho'_S\in\sD(\sH_S)$. Notice that the commutativity condition is required to hold for all times $t_i>t_0$: the system is assumed to be initially quantum and its state space, at $t_0$, is generally non-commuting. One can think of semiclassical dynamical mappings as sequences of completely decohering (sometimes dubbed `quantum-to-classical' or just `qc') channels, e.g, sequences of complete non-degenerate projective (von Neumann) measurements.


We can now state the first main result of this paper as follows:
\begin{proposition}[Semiclassical case]\label{prop:classical}
	A given discrete-time semiclassical dynamical mapping is divisible if and only if it is information decreasing.
\end{proposition}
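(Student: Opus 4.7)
Plan: The forward direction (divisible $\Rightarrow$ information decreasing) is a one-line data-processing argument. If $\mN^{i+1}=\mL(i+1,i)\circ\mN^i$, then for any POVM $\{P^x\}_x$ used to discriminate the states $\mN^{i+1}(\rho^x_S)$, the POVM $\{\mL(i+1,i)^\ast(P^x)\}_x$ obtained by pulling back through the Heisenberg dual discriminates $\mN^i(\rho^x_S)$ with identical success probability. Maximising over POVMs gives $\pg(\mE_i)\ge\pg(\mE_{i+1})$ for every ensemble $\mE$, so the mapping is information decreasing.

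For the converse, the plan is to construct, for every $i\ge0$, a CPTP map $\mC^{i+1}:\bound(\sH_S)\to\bound(\sH_S)$ satisfying $\mN^{i+1}=\mC^{i+1}\circ\mN^i$; the compositions $\mL(j,i):=\mC^j\circ\cdots\circ\mC^{i+1}$ then witness divisibility as in Fig.~\ref{fig:1}. The step $i=0$ is immediate: since $\mN^0=\id$, set $\mC^1:=\mN^1$. For $i\ge 1$ the semiclassical hypothesis guarantees that the images of $\mN^i$ and of $\mN^{i+1}$ are respectively diagonal in some orthonormal bases $\{|k\rangle\}_k$ and $\{|\ell\rangle\}_\ell$ of $\sH_S$, so I can write $\mN^i(\rho)=\sum_k p^i(k|\rho)\,|k\rangle\langle k|$ and $\mN^{i+1}(\rho)=\sum_\ell p^{i+1}(\ell|\rho)\,|\ell\rangle\langle\ell|$ for conditional distributions linear in $\rho$.

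Applying Proposition~0 to the pair $\mN^i\succeq\mN^{i+1}$ with the sharp measurement $\{|\ell\rangle\langle\ell|\}_\ell$ produces a POVM $\{P^\ell\}_\ell$ on $\sH_S$ with $\Tr[P^\ell\,\mN^i(\rho)]=p^{i+1}(\ell|\rho)$ for every state $\rho$. Because $\mN^i(\rho)$ is diagonal in $\{|k\rangle\}$, only the diagonal entries of $P^\ell$ enter this identity, and setting $T(\ell|k):=\langle k|P^\ell|k\rangle$ yields a stochastic matrix (non-negativity from $P^\ell\ge0$, column normalisation from $\sum_\ell P^\ell=\openone$) satisfying the classical Blackwell post-processing relation $p^{i+1}(\ell|\rho)=\sum_k T(\ell|k)\,p^i(k|\rho)$. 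I then define
\begin{equation*}
\mC^{i+1}(\tau):=\sum_{k,\ell}T(\ell|k)\,\langle k|\tau|k\rangle\,|\ell\rangle\langle\ell|,
\end{equation*}
which is a pinching in $\{|k\rangle\}$ composed with a classical relabelling, hence manifestly CPTP; direct substitution yields $\mC^{i+1}\circ\mN^i=\mN^{i+1}$, as required.

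The main obstacle is the single step of upgrading the abstract POVM-simulability statement of Proposition~0 into a concrete CPTP channel $\mC^{i+1}$ on $\bound(\sH_S)$. In full quantum generality this reconstruction is delicate and lies at the heart of the companion result in the Supplemental Material; however, the semiclassical hypothesis collapses the operator algebras at times $t_i$ and $t_{i+1}$ onto commutative subalgebras, so the problem reduces to classical Blackwell-Sherman-Stein and the required stochastic kernel $T$ is literally read off from the diagonal entries of the simulating POVM in the eigenbasis of $\mN^i$'s image.
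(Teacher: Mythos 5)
Your proof is correct and follows essentially the same route as the paper's (Lemma~2 of the Supplemental Material): invoke the Blackwell--Sherman--Stein-type proposition with the sharp measurement in the common eigenbasis of the later channel's outputs, obtain a simulating POVM $\{P^\ell\}$ on the earlier channel's outputs, and package it into a measure-and-prepare channel witnessing $\mN^{i+1}=\mC^{i+1}\circ\mN^i$. The only cosmetic difference is that you further pinch $P^\ell$ to a classical stochastic matrix $T(\ell|k)=\langle k|P^\ell|k\rangle$, which uses commutativity of $\mN^i$'s output as well, whereas the paper defines $\mC(\bullet)=\sum_\ell |\ell\rangle\langle\ell|\,\Tr[\bullet\, P^\ell]$ directly and therefore needs only the later channel's output to be abelian (this also subsumes your separate treatment of the step $i=0$).
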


The above is a direct consequence of the following lemma:

\begin{lemma}\label{lemma:semicl}
	Let $\mN:\sL(\sH_A)\to\sL(\sH_B)$ and $\mN':\sL(\sH_A)\to\sL(\sH_{B'})$ be two CPTP maps. Suppose that the output of $\mN'$ is abelian, i.e., $[\mN'(\rho),\mN'(\sigma)]=0$, for any $\rho,\sigma\in\sD(\sH_A)$.
	
	Then, $\mN\succeq\mN'$ if and only if there exists a third CPTP map $\mC:\sL(\sH_B)\to\sL(\sH_{B'})$ such that
	\[
	\mN'=\mC\circ\mN.
	\] 
\end{lemma}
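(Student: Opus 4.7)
The plan is to prove both directions, with the easy direction (``if'') following immediately from Lemma~\ref{lemma:ordering} and the nontrivial direction (``only if'') exploiting the abelian output hypothesis to build the post-processing channel $\mC$ explicitly as a measure-and-prepare map.

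For the ``if'' direction, suppose $\mN'=\mC\circ\mN$ for some CPTP map $\mC$. Then for any POVM $\{Q^y_{B'}\}_y$ on $\sH_{B'}$, the operators $P^y_B:=\mC^*(Q^y_{B'})$ (where $\mC^*$ denotes the Heisenberg-picture adjoint) form a POVM on $\sH_B$ satisfying $\Tr[\mN(\rho_A)\,P^y_B]=\Tr[\mC(\mN(\rho_A))\,Q^y_{B'}]=\Tr[\mN'(\rho_A)\,Q^y_{B'}]$ for every $\rho_A\in\sD(\sH_A)$. Condition~(3) of Lemma~\ref{lemma:ordering} is therefore satisfied, and hence $\mN\succeq\mN'$.

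For the ``only if'' direction, the abelian hypothesis guarantees that the family of output states $\{\mN'(\rho_A)\}_{\rho_A\in\sD(\sH_A)}$ consists of pairwise commuting Hermitian operators on the finite-dimensional space $\sH_{B'}$; it therefore admits a common orthonormal eigenbasis $\{|y\rangle\}_{y\in\set{Y}}$ of $\sH_{B'}$, so that every output $\mN'(\rho_A)$ is diagonal in this basis. Consider then the rank-one POVM $\{Q^y_{B'}:=|y\rangle\langle y|\}_{y\in\set{Y}}$. By assumption $\mN\succeq\mN'$, so condition~(3) of Lemma~\ref{lemma:ordering} yields a POVM $\{P^y_B\}_{y\in\set{Y}}$ on $\sH_B$ such that
\begin{equation}
\Tr[\mN(\rho_A)\,P^y_B]=\Tr[\mN'(\rho_A)\,|y\rangle\langle y|]=\langle y|\mN'(\rho_A)|y\rangle,
\end{equation}
for every $\rho_A\in\sD(\sH_A)$ and every $y\in\set{Y}$.

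I then define the measure-and-prepare (``quantum-to-classical'') map $\mC:\sL(\sH_B)\to\sL(\sH_{B'})$ by
\begin{equation}
\mC(X_B):=\sum_{y\in\set{Y}}\Tr[X_B\,P^y_B]\,|y\rangle\langle y|.
\end{equation}
This map is manifestly completely positive (it is a sum of CP terms of the form $X\mapsto\Tr[X\,P^y]\,|y\rangle\langle y|$) and trace-preserving (since $\sum_y P^y_B=\openone_B$), so it is a CPTP map. For every $\rho_A\in\sD(\sH_A)$,
\begin{equation}
(\mC\circ\mN)(\rho_A)=\sum_{y}\Tr[\mN(\rho_A)\,P^y_B]\,|y\rangle\langle y|=\sum_{y}\langle y|\mN'(\rho_A)|y\rangle\,|y\rangle\langle y|=\mN'(\rho_A),
\end{equation}
where the last equality uses that $\mN'(\rho_A)$ is diagonal in the basis $\{|y\rangle\}$. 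Hence $\mN'=\mC\circ\mN$, as required. The main subtlety is justifying the existence of a single eigenbasis simultaneously diagonalizing all output states; this is where finite-dimensionality and the abelian hypothesis enter crucially, and without it one could not reduce the general post-processing problem to the construction of a single POVM via Lemma~\ref{lemma:ordering}.
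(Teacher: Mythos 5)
Your proof is correct and follows essentially the same route as the paper's: simultaneously diagonalize the commuting outputs of $\mN'$, apply condition (3) of Lemma~\ref{lemma:ordering} to the rank-one POVM in that basis, and assemble the resulting POVM into a measure-and-prepare channel $\mC$. The only difference is that you spell out the `if' direction via the Heisenberg adjoint, which the paper dismisses as trivial.
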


\begin{proof}
	Here we prove only the `only if' part of the statement, as the `if' part is trivial. The proof is based on the analogous result for bipartite states derived in Ref.~\cite{buscemi2012comparison}.
	
	Since the outputs of $\mN'$ are all commuting, it is possible to find a basis $\{|i_{B'}\>\in\sH_{B'}\}_i$ that diagonalizes them all simultaneously. A simple identity then gives:
	\begin{equation}\label{eq:simple-id}
	\mN'(\rho_A)=\sum_i|i_{B'}\>\<i_{B'}|\Tr\{\mN'(\rho_A)\ |i_{B'}\>\<i_{B'}| \}, 
	\end{equation}
	for all $\rho_A\in\sD(\sH_A)$. On the other hand, since $\mN\succeq\mN'$, and since $\{|i_{B'}\>\<i_{B'}|\}_i$ constitutes a well-defined POVM on $\sH_{B'}$, we know that there exists a POVM $\{P^i_B\}$ on $\sH_B$ such that
	\[
	\Tr\{\mN'(\rho_A)\ |i_{B'}\>\<i_{B'}| \}= \Tr\{\mN(\rho_A)\ P^i_B \},
	\]
	for all $\rho_A\in\sD(\sH_A)$ and all $i$. Inserting the above equation into~(\ref{eq:simple-id}), one obtains the identity
	\[
	\mN'(\rho_A)=\sum_i|i_{B'}\>\<i_{B'}|\Tr\{\mN(\rho_A)\ P^i_B \},
	\]
	valid for all $\rho_A\in\sD(\sH_A)$,
	which can be equivalently written as $\mN'=\mC\circ\mN$ upon defining the CPTP map $\mC$ as
	\[
	\mC(\bullet_B):=\sum_i|i_{B'}\>\<i_{B'}|\Tr\{\bullet_B\ P^i_B \}.
	\]
	The above equation shows, in particular, that the map $\mC:\sL(\sH_B)\to\sL(\sH_{B'})$ is a CPTP map defined everywhere, as claimed in the statement.
\end{proof}

\subsection{The fully classical case}

The fully classical case corresponds to the situation, in which the evolving system is assumed to be classical already from the start (i.e., from $t_0$ included). Indeed, when the evolving system is classical, under the customary correspondence between diagonal density matrices and probability distributions, CPTP maps become conditional probabilities. In this way, the formalism of classical Markov chains can be recovered:

\begin{corollary}[Classical case]
	Let $(N^i)_{i\ge0}$ be a sequence of noisy channels represented by the conditional probabilities $p(x_i,t_i|x_0,t_0)$, modeling the evolution of an initial random variable $X_0$ to successive time-steps $t_i\ge t_0$. Then, the sequence $(N^i)_{i\ge0}$ is information decreasing if and only if, for any initial distribution $p(x_0,t_0)$ of $X_0$,
	there exists a Markov chain $(X_j)_{j\ge 0}$ whose two-point marginals $(X_i,X_0)$ are distributed according to $p(x_i,t_i;x_0,t_0)=p(x_i,t_i|x_0,t_0)p(x_0,t_0)$, for all $i\ge 1$.
\end{corollary}

In other words, a classical dynamical mapping never increases the distinguishability of any initial ensemble of probability distributions, if and only if it can always be `embedded' in an underlying Markov chain.

\section{The quantum case}\label{sec:quantum}

We now turn to the case in which the discrete-time dynamical mapping is fully quantum, i.e., the channels in the sequence $(\mN^i)_{i\ge0}$ are linear, CPTP maps with non-commuting outputs. In this case, it is customary to allow the evolving quantum system, originally associated with the Hilbert space $\sH_S$, to be part of a larger system, associated with a tensor product space $\sH_{S'}\otimes\sH_S$. Accordingly, we reformulate Definition~\ref{def:info-dec} to take into account such possible extensions:
\begin{definition}\label{def:ext-info-dec}
	A discrete-time dynamical mapping $(\mN^i)_{i\ge0}$ is said to be \emph{completely information decreasing} if and only if, for any auxiliary Hilbert space $\sH_{S'}$ and for any finite ensemble $\widetilde{\mE}=\{p(x);\rho^x_{S'S}\}_x$ of states on $\sH_{S'}\otimes\sH_S$, the ordered sequence of guessing probabilities	$(\pg(\widetilde{\mE}_i))_{i\ge 0}$, where \[\widetilde{\mE}_i:=\{p(x);(\id_{S'}\otimes\mN^i_S)(\rho_{S'S}^x)\}_x,\] is monotonically non-increasing, i.e., \[\pg(\widetilde{\mE}_i)\ge\pg(\widetilde{\mE}_{i+1})\] for all $i\ge 0$.
\end{definition}
Using the partial ordering notation $\succeq$ previously introduced, we can equivalently say that the process described by the dynamical mapping $(\mN^i)_{i\ge 0}$ is completely information decreasing if and only if \[(\id_{S'}\otimes\mN^0_S)\succeq(\id_{S'}\otimes\mN^1_S)\succeq\cdots\succeq(\id_{S'}\otimes\mN^N_S),\]
for all auxiliary systems $S'$. Then, the following statement holds:
\begin{proposition}[Quantum case]\label{prop:quantum}
	A given discrete-time quantum dynamical mapping is divisible if and only if it is completely information decreasing.
\end{proposition}

The above is a direct consequence of the following:

\begin{lemma}\label{lemma:quantum}
	Given a pair of CPTP maps $\mN:\sL(\sH_A)\to\sL(\sH_B)$ and $\mN':\sL(\sH_A)\to\sL(\sH_{B'})$, let $\sH_{B''}$ be an auxiliary Hilbert space isomorphic with $\sH_{B'}$, i.e., $\sH_{B''}\cong\sH_{B'}$, and $\id:\sL(\sH_{B''})\to\sL(\sH_{B''})$ the corresponding identity map.
	
	Then, $\id\otimes\mN\succeq\id\otimes\mN'$ if and only if there exists a third CPTP map $\mC:\sL(\sH_B)\to\sL(\sH_{B'})$ such that
	\[
	\mN'=\mC\circ\mN.
	\] 
\end{lemma}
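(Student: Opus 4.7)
The ``if'' direction is immediate from data processing: if $\mN'=\mC\circ\mN$, then $\id\otimes\mN'=(\id\otimes\mC)\circ(\id\otimes\mN)$, and for any POVM $\{Q^y\}$ on the output of $\id\otimes\mN'$ its Heisenberg image $\{(\id\otimes\mC)^\dagger(Q^y)\}$ is a POVM on the output of $\id\otimes\mN$ producing identical statistics on every input; condition~(3) of Lemma~\ref{lemma:ordering} then delivers $\id\otimes\mN\succeq\id\otimes\mN'$.

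For the nontrivial direction, the plan is to convert the channel-level comparison into a state-level one through the Choi--Jamio{\l}kowski isomorphism. I take the ancilla to be a copy $\sH_{A'}\cong\sH_A$ of the input space (which is what the argument actually requires, and is consistent with the convention of the main text) and feed the maximally entangled state $|\Phi^+\>\<\Phi^+|_{A'A}$ into both extended channels; the outputs are, up to normalization, the Choi states $J(\mN)\in\sD(\sH_{A'}\otimes\sH_B)$ and $J(\mN')\in\sD(\sH_{A'}\otimes\sH_{B'})$. Condition~(3) of Lemma~\ref{lemma:ordering} applied to the extended channels then supplies, for every POVM $\{Q^y_{A'B'}\}$, a matching POVM $\{P^y_{A'B}\}$ such that $\Tr[(\id\otimes\mN')(\rho_{A'A})Q^y]=\Tr[(\id\otimes\mN)(\rho_{A'A})P^y]$ for all $\rho_{A'A}$.

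The main step is to extract from this matching a CPTP map $\mC:\sL(\sH_B)\to\sL(\sH_{B'})$ with $\mN'=\mC\circ\mN$. Selecting an informationally complete family $\{Q^y\}$ on $\sH_{A'}\otimes\sH_{B'}$ and specialising the matching to inputs of the form $(\openone_{A'}\otimes\rho_A^{1/2})|\Phi^+\>\<\Phi^+|_{A'A}(\openone_{A'}\otimes\rho_A^{1/2})$, one uses the ``ricochet'' identity $(X_{A'}\otimes\openone_A)|\Phi^+\>_{A'A}=(\openone_{A'}\otimes X^T_A)|\Phi^+\>_{A'A}$ to read off a linear map $\mC$ whose value on $\mN(\rho_A)$ coincides with $\mN'(\rho_A)$ for every $\rho_A\in\sD(\sH_A)$; extending linearly and trivially off the image of $\mN$ gives a map on all of $\sL(\sH_B)$. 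Positivity of $\mC$ is inherited from $\{P^y\}$ being a POVM, and trace preservation from $\mN,\mN'$ being CPTP.

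The central obstacle is \emph{complete} positivity of $\mC$: were the ordering $\mN\succeq\mN'$ imposed only at the level of unentangled inputs, the construction above would in general give a positive but non-CP map. The tensoring with $\id_{A'}$---equivalently, the availability of the maximally entangled input---is precisely what forces the matching to respect all Choi-type states and hence pins $\mC$ down as CP. A tidy packaging of this step is to invoke the state-level quantum Blackwell theorem of~\cite{buscemi2012comparison}: the extended ordering $\id\otimes\mN\succeq\id\otimes\mN'$ induces a statistical ordering of the Choi states $J(\mN)$ and $J(\mN')$ (which share the maximally mixed marginal on $A'$), from which the state-level theorem directly produces the post-processing channel $\mC$.
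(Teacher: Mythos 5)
There is a genuine gap, on two counts.

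First, you have silently changed the hypothesis. The lemma grants the ordering $\id\otimes\mN\succeq\id\otimes\mN'$ with an ancilla $\sH_{B''}\cong\sH_{B'}$, i.e., a copy of the \emph{output} space of $\mN'$, whereas your argument requires an ancilla $\sH_{A'}\cong\sH_A$ carrying a maximally entangled state with the \emph{input}. These are not interchangeable: when $\dim\sH_A>\dim\sH_{B'}$ the input $|\Phi^+\>_{A'A}$ is simply not available from the stated hypothesis, so your Choi-state reduction cannot even begin. Your remark that the input-copy ancilla ``is what the argument actually requires'' is the opposite of what the paper's argument requires: the ancilla is taken isomorphic to $\sH_{B'}$ precisely because the proof teleports the \emph{output} of $\mN'$, not its input.

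Second, and more importantly, the crux --- complete positivity of $\mC$ --- is asserted rather than proved. The construction you sketch (informationally complete $\{Q^y\}$, ricochet identity, ``extend linearly and trivially off the image of $\mN$'') produces at best a linear map: the dual-frame operators needed to invert an informationally complete POVM are not positive, nothing in the construction forces the resulting map on $\sL(\sH_{A'}\otimes\sH_B)$ to have the product form $\id_{A'}\otimes\mC$, and an arbitrary extension off the image of $\mN$ need not preserve positivity, let alone complete positivity. Deferring to the state-level theorem of~\cite{buscemi2012comparison} does not close this: the CPTP conclusion of that theorem itself requires a ``complete'' (ancilla-extended) comparison of the statistical models generated by the two Choi states, and deriving that comparison from the lemma's hypothesis is exactly the step that is missing. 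The paper avoids all of this by a concrete device you do not use: it adjoins a second copy $\sH_{B'''}\cong\sH_{B''}$ in the state $|\Phi^+_{B'''B''}\>$, upgrades the scalar matching of statistics to an operator identity via $\Tr[XA]=\Tr[YA]\ \forall A\ge0\Rightarrow X=Y$, and then chooses $\{Q^y_{B''B'}\}$ to be the Bell measurement of the generalized teleportation protocol [S1], so that $\mN'(\rho_A)$ is recovered on $B'''$ after unitary corrections $\mathcal{U}^y$. The matched POVM $\{P^y_{B''B}\}$ then defines $\mC$ explicitly as ``append $|\Phi^+_{B'''B''}\>$, measure $\{P^y\}$ on $B''B$, apply $\mathcal{U}^y$ to $B'''$,'' which is manifestly CPTP with no extension or positivity argument needed. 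If you want to salvage your route, you must either prove that the $\sH_{B'}$-sized ancilla hypothesis implies the $\sH_A$-sized one, or rework the argument with the ancilla the lemma actually provides, and in either case supply a construction of $\mC$ that is CPTP by inspection.
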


\begin{proof}
	Here we prove only the `only if' part of the statement, as the `if' part is trivial. The proof presented here is based on a series of results that appeared in Ref.~\cite{shmaya2005comparison,chefles2009quantum,buscemi2012comparison,Buscemi2014}.
	
	By hypothesis, it holds that $\id\otimes\mN\succeq\id\otimes\mN'$, which implies, in particular, that, for any finite alphabet $\set{Y}=\{y\}$ and any POVM $\{Q^y_{B''B'}\}_y$ there exists a POVM $\{P^y_{B''B}\}_y$ such that
	\begin{equation}\label{eq:exteded-id}
	\begin{split}
	&\Tr\left[\{\omega_{B''}\otimes\mN'(\rho_A)\}\ Q^y_{B''B'}\right]\\  =& \Tr\left[\{\omega_{B''}\otimes\mN(\rho_A)\}\ P^y_{B''B}\right],
	\end{split}
	\end{equation}
	for all $y\in \set{Y}$, all $\omega_{B''}\in\sD(\sH_{B''})$, and all $\rho_A\in\sD(\sH_A)$.
	
	Upon introducing another auxiliary Hilbert space $\sH_{B'''}\cong\sH_{B''}\cong\sH_{B'}$ and a maximally entangled state $|\Phi^+_{B'''B''}\>\in\sH_{B'''}\otimes\sH_{B''}$, the condition expressed in Eq.~(\ref{eq:exteded-id}) can be rewritten as follows: for any alphabet $\set{Y}=\{y\}$ and any POVM $\{Q^y_{B''B'}\}_y$ there exists a POVM $\{P^y_{B''B}\}_y$ such that
	\begin{equation}\label{eq:indentity-reals}
	\begin{split}
	&\Tr\left[\{|\Phi^+_{B'''B''}\>\<\Phi^+_{B'''B''}|\otimes\mN'(\rho_A)\}\ \{\Omega_{B'''}\otimes Q^y_{B''B'}\}  \right]\\
	=& \Tr\left[\{|\Phi^+_{B'''B''}\>\<\Phi^+_{B'''B''}|\otimes\mN(\rho_A)\}\ \{\Omega_{B'''}\otimes P^y_{B''B}\}\right] ,
	\end{split}
	\end{equation}
	for all $y\in \set{Y}$, all $\rho_A\in\sD(\sH_A)$, and all $0\le\Omega_{B'''}\in\sL(\sH_{B'''})$.
	
	We now make use of the simple fact that, $\Tr[XA]=\Tr[YA]$ for all $A\ge 0$ if and only if $X=Y$, to reformulate condition~(\ref{eq:indentity-reals}), involving positive numbers, into a condition involving operators: for any alphabet $\set{Y}=\{y\}$ and any POVM $\{Q^y_{B''B'}\}_y$ there exists a POVM $\{P^y_{B''B}\}_y$ such that
	\begin{equation}\label{eq:indentity-op}
	\begin{split}
	&\Tr_{B''B'}\left[\{|\Phi^+_{B'''B''}\>\<\Phi^+_{B'''B''}|\otimes\mN'(\rho_A)\}\ \{\openone_{B'''}\otimes Q^y_{B''B'}\}  \right]\\&= \Tr_{B''B}\left[\{|\Phi^+_{B'''B''}\>\<\Phi^+_{B'''B''}|\otimes\mN(\rho_A)\}\ \{\openone_{B'''}\otimes P^y_{B''B}\}\right] ,
	\end{split}
	\end{equation}
	for all $y\in \set{Y}$ and all $\rho_A\in\sD(\sH_A)$.
	
	Now we recall the protocol of (generalized) teleportation of Ref.~\cite{braunstein}, according to which one can always choose the alphabet $\set{Y}=\{y\}$ and the POVM $\{Q^y_{B'B''}\}_y$ in Eq.~(\ref{eq:indentity-op}) such that
	\begin{widetext}
	\[
	\mN'(\rho_A)=\sum_y \mathcal{U}^y_{B'''}\circ\Tr_{B''B'}\left[\{|\Phi^+_{B'''B''}\>\<\Phi^+_{B'''B''}|\otimes\mN'(\rho_A)\}\ \{\openone_{B'''}\otimes Q^y_{B''B'}\}  \right],
	\]
	for all $\rho_A\in\sD(\sH_A)$,
	where the maps $\mathcal{U}^y:\sL(\sH_{B'''})\to\sL(\sH_{B'})$ are suitable unitary CPTP maps, i.e., $\mathcal{U}^y(\bullet)=U_y\bullet U_y^\dag$ with $U_y^\dag U_y=\openone_{B'}$. Then, condition~(\ref{eq:indentity-op}) guarantees the existence of a POVM $\{P^y_{B'''B}\}_y$ such that
	\[
	\mN'(\rho_A)=\sum_y \mathcal{U}^y_{B'''}\circ\Tr_{B''B}\left[\{|\Phi^+_{B'''B''}\>\<\Phi^+_{B'''B''}|\otimes\mN(\rho_A)\}\ \{\openone_{B'''}\otimes P^y_{B''B}\}  \right],
	\]
	for all $\rho_A\in\sD(\sH_A)$. The above identity can be equivalently written as the channel identity $\mN'=\mC\circ\mN$, upon introducing the map $\mC:\sL(\sH_B)\to\sL(\sH_{B'})$ defined as
	\[
	\mC(\bullet_B):=\sum_y \mathcal{U}^y_{B'''}\circ\Tr_{B''B}\left[\{|\Phi^+_{B'''B''}\>\<\Phi^+_{B'''B''}|\otimes\bullet_B\}\ \{\openone_{B'''}\otimes P^y_{B''B}\}  \right].
	\]
	\end{widetext}
	The above equation shows, in particular, that the map $\mC$ is a CPTP map defined everywhere, as claimed in the statement.
\end{proof}

A direct consequence of Proposition~\ref{prop:quantum} is that any quantum dynamical mapping that is not CPTP divisible gives rise to observable memory effects, in the following sense:

\begin{corollary}\label{cor:memory-effects}
	A quantum dynamical mapping $(\mN^i)_{i\ge0}$ is not divisible into linear CPTP maps, if and only if there exists an auxiliary Hilbert space $\sH_{S'}$, a finite ensemble of bipartite states $\widetilde{\mE}=\{p(x);\rho^x_{SS'}\}_x$, and a time $t_{\bar k}$ such that
	\[
	\pg(\widetilde{\mE}_{\bar k})>\pg(\widetilde{\mE}_{\bar k-1}).
	\]
\end{corollary}

In other words, it is possible to observe, at some time-step $t_{\bar k}$, a strict increase in the distinguishability for an initial ensemble $\widetilde{\mE}$. This can only happen if some memory is being kept during the evolution.

\section{Discussion}\label{sec:discussion}

Propositions~\ref{prop:classical} and~\ref{prop:quantum} above establish that divisibility of a discrete-time dynamical mapping is equivalent to a monotonic decrease of information (as measured by the guessing probability). This in particular implies (see Corollary~\ref{cor:memory-effects} above) that, as soon as a discrete-time dynamical mapping is \textit{not} divisible, then there necessarily exists an initial ensemble of quantum states whose guessing probability \textit{strictly} increases at some point along the evolution. Propositions~\ref{prop:classical} and~\ref{prop:quantum} hence provide the information-theoretic underpinning of divisibility, which therefore constitutes the key feature of memoryless processes. This equivalence is also valid in the case of continuous-time stochastic processes, since the latter can be obtained from the discrete-time setting by considering instants in time which are arbitrarily close to each other.

In this respect, our approach can be seen as a generalization of the idea first proposed by Breuer, Laine, and Piilo in~\cite{Breuer2009}. They characterize stochastic processes by tracking the change in the distinguishability of two different initial states of the system under dynamic evolution. However, while in~\cite{Breuer2009} only equiprobable pairs of states are considered, here we track the evolution of arbitrary ensembles of quantum states, i.e., ensembles consisting of more than two states in general, with arbitrary \textit{a priori} probabilities, and possibly living on an extended Hilbert space.

Our condition is therefore stronger than that in~\cite{Breuer2009}, and it is indeed equivalent to divisibility, while the criterion proposed in~\cite{Breuer2009} is only necessary but not sufficient, as explicitly shown by Chru\'{s}ci\'{n}ski, Kossakowski, and Rivas in~\cite{Chruscinski2011}. In fact, building upon the results of~\cite{kossakowski1972necessary},  Chru\'{s}ci\'{n}ski \textit{et al.} also propose a strengthened version of the criterion of Breuer \textit{et al.}, which is similar to ours, though based on a completely different proof strategy.

In particular, their criterion can only be applied to sequences of quantum channels $(\mN^i)_{i\ge 0}$ that are all bijective (as linear maps), thus excluding physically relevant situations (for example, semiclassical processes such as quantum measurements and decoherence, but not only those~\cite{nonsing,maldonado}) and common information-theoretic processes (for example, encodings, decodings, measurements, etc) that are typically non bijective.

The assumption of bijectivity is very limiting not only in quantum information theory, but also in classical information theory, in which Markov chains are typically used to model encoding-channel-decoding schemes~\cite{cover2012elements}. As such, it is clear that the assumption of bijectivity eludes a purely information-theoretic or operational description and must be put `by hand' on top of the dynamical evolution. Hence, a merit of our approach is that it does not require \textit{any} assumption about the underlying stochastic process: the channels constituting the dynamical mapping can be completely arbitrary, and our results can therefore be applied to any possible situation.

Note that Proposition~\ref{prop:quantum} also provides an operational characterization of reversible stochastic processes, as those for which the guessing probability is constant, i.e.,
\[
\pg(\widetilde{\mE}_i)=\pg(\widetilde{\mE}_{i+1}),\ 0\le i\le N-1,
\]
for any initial ensemble. The above equality implies not only the existence of `direct propagators', i.e., quantum channels $\mC^i$ such that $\mC^{i+1}\circ\mN^i=\mN^{i+1}$, but also the existence of `reverse propagators,' i.e., quantum channels $\mR^i$ such that $\mR^{i-1}\circ\mN^i=\mN^{i-1}$ (recall the consistency requirement $\mN^0=\id$). In other words, a stochastic process which preserves information has to be \textit{reversible}. In the particular case in which the system's Hilbert space remains the same (i.e., $\sH_i\cong\sH_S$ for all $i\ge1$), since the only reversible CPTP maps from $\bound(\sH)$ into itself 
are unitary ones~\cite{footnote1}, we arrive at the following conclusion: \textit{the only dynamical mappings which preserve information perfectly 
are those describing the evolution of a closed system}. This is in keeping with intuition, since a closed system has no environment to serve as memory: its evolution is, therefore, automatically Markovian~\cite{footnote2}.

A further (perhaps surprising) observation is that, according to the results in~\cite{buscemi2012comparison,Buscemi2014}, the ensembles of bipartite states $\widetilde{\mE}$ used in Definition~\ref{def:ext-info-dec} can without loss of generality be restricted to ensembles of \textit{separable states}.  This is because, as shown in Ref.~\cite{Buscemi2014}, the identity channel $\id_{S'}$, used in Definition~\ref{def:ext-info-dec} to define the partial ordering relation $\{\id_{S'}\otimes\mN^i\}\succeq\{\id_{S'}\otimes\mN^{i+1}\}$, can be replaced, without loss of generality, with some other noisy channel $\mM_{S'}:\sL(\sH_{S'})\to\sL(\sH_{S'})$, under the sole condition that $\mM_{S'}$ is \textit{complete}~\cite{Buscemi2014}, i.e., its image spans the whole $\sL(\sH_{S'})$. Since there exist complete quantum channels which are entanglement-breaking (e.g., a depolarizing channel $\mD^\varepsilon(\omega)=\varepsilon\omega+(1-\varepsilon)d^{-1}\openone$ with sufficiently small but nonzero $\varepsilon$), we arrive at the conclusion that in Definition~\ref{def:ext-info-dec} it actually suffices to consider bipartite states $\rho^x_{S'S}$ which are separable. In principle this fact may simplify the experimental assessment of Markovianity, since entanglement is not needed.

The authors acknowledge helpful discussions with Jeremy Butterfiled, Weien Chen, Dariusz Chru\'{s}ci\'{n}ski, David Reeb, and Sergii Strelchuk. F.B. acknowledges financial support from the JSPS KAKENHI, No. 26247016, and hospitality from the Statistical Laboratory of the University of Cambridge, where this research was done.

\appendix

\end{document}